\documentclass[conference]{IEEEtran}
\usepackage{cite}
\usepackage{amsmath,amssymb,amsfonts,amsthm}
\usepackage{graphicx}
\usepackage{textcomp}
\usepackage{xcolor}
\usepackage{xspace}

\usepackage{enumitem}
\usepackage{ipdps26repro}

\usepackage{algorithm}
\usepackage{algpseudocodex}

\newtheorem{lemma}{Lemma}
\newtheorem{theorem}{Theorem}

\usepackage{booktabs}
\usepackage{siunitx}

\title{GPU-Accelerated Multilevel Graph Clustering: A Parallel Perspective on Louvain and Leiden}
\author{\IEEEauthorblockN{Michael S. Gilbert}
\IEEEauthorblockA{\textit{Dept. of Computer Science and Engineering} \\
\textit{The Pennsylvania State University}\\
University Park, PA, USA \\
msg5334@psu.edu}
\and
\IEEEauthorblockN{Kamesh Madduri}
\IEEEauthorblockA{\textit{Dept. of Computer Science and Engineering} \\
\textit{The Pennsylvania State University}\\
University Park, PA, USA \\
madduri@psu.edu}
}

\begin{document}

\maketitle

\begin{abstract}
The sequential Louvain and Leiden algorithms are widely used techniques for modularity-optimizing clustering (or community detection) in large graphs.
We present pLouvain and pLeiden, two new GPU parallelizations. pLouvain is based on the Louvain+ extension. pLeiden is the first parallel implementation to provably preserve all quality guarantees of sequential Leiden. We achieve this through a novel spanning-tree-based refinement approach. Both pLouvain and pLeiden use a lightweight symmetry-breaking technique that emulates an ordered traversal of vertices. For pLouvain, we develop an alternative iteration strategy to rectify the weak internal cluster connectivity observed in Louvain/Louvain+. Further, both pLouvain and pLeiden optimize the LambdaCC objective function, a generalization of modularity and the related Constant Potts model.

On a collection of 57 graphs from 10 families, our results show that pLouvain and pLeiden achieve geometric mean speedups of 3.1x and 8.8x, respectively, over the current fastest open-source parallelizations of Louvain and Leiden. For the clusterings generated, pLouvain yields the highest modularity scores on nearly all tested graphs. The subroutines within these two multilevel approaches could aid in the parallelization of other Louvain-based techniques.

\end{abstract}

\section{Introduction}
Graph clustering is the task of identifying dense sub-structures in graphs.
The applications range from protein function analysis to recommendation systems \cite{fortunato2010community}.
Heuristic methods such as the Louvain \cite{louvain_base} and Leiden \cite{leiden_base} methods exploit the common hierarchical structure of real-world graphs.
Louvain and Leiden are both multilevel methods, a class of heuristics used to solve a variety of problems involving sparse systems~\cite{Teng99}.
Both methods have become foundational techniques in graph analysis.

Due to their popularity and the need to process large graphs, a plethora of prior work has probed their potential for parallelism.
Despite Louvain's almost two-decade long existence, one such recent work \cite{gve_louvain} greatly improved upon the efficiency of other parallel Louvain methods, by a factor of at least twenty times on multicore systems and even six times versus cuGraph's GPU Louvain.
A corresponding GPU implementation \cite{Sahu2025-ye} failed to generate significant performance improvements on many graphs over the multicore version.
Another recent GPU implementation \cite{wang_gpu} is slower still according to our experiments.
pLouvain, our GPU Louvain implementation, is the first to achieve substantial speedups versus the state-of-the-art multicore CPU implementations.

pLouvain is based on Louvain+ \cite{louvain+}, a simple extension to Louvain that further develops the multilevel aspect of Louvain to deliver superior quality.
A prior parallel adaptation~\cite{shi2021scalable} of Louvain+ targeted multicore CPUs using an asynchronous approach.
We find Louvain+ is particularly beneficial for a synchronous approach, which is necessitated by the high degree of concurrency on modern GPUs. pLouvain further includes several GPU-centric enhancements to the \textit{local move} and graph contraction phases of Louvain.

The Leiden method has seen fewer attempts at parallelization.
Among them, none provably offer the core guarantees that define the Leiden method.
We present a novel approach to Leiden's refinement scheme built upon spanning trees, which provides all the core guarantees.
Our proof of this claim observes that the original Leiden refinement scheme can generate a clustering if-and-only-if our parallelization pLeidenR can generate it.
Furthermore, we extend Leiden in the vein of Louvain+ to create pLeiden+.

In summary, the contributions of this work are as follows:
\begin{enumerate}
    \item On an NVIDIA B200 GPU and a 57-instance graph dataset, pLouvain achieves a 3.1x geometric mean speedup over the fastest competitor~\cite{Sahu2025-ye}.
    \item pLeiden provably ensures the quality guarantees of Leiden, and is 8.8x faster (geometric mean) than a prior parallelization \cite{leiden_sahu}.
    \item pLouvain and pLeiden+ consistently outperform other approaches in terms of quality (modularity).
    \item We present an efficient symmetry-breaking technique based on the Jet \cite{jet} graph partitioner's \textit{afterburner} filter.
    \item Our graph contraction implementation is $\geq$15x faster by geometric mean than state-of-the-art competitors \cite{Sahu2025-ye, wang_gpu}.
    \item We give an improved iteration scheme to mitigate Louvain's weak internal cluster connectivity problem \cite{leiden_base}.
\end{enumerate}

\section{Problem Definition}
Let $G = (V,E)$ represent an undirected graph, $V$ denoting the vertex set and $E$ denoting the edge set.
An edge is defined as a pair $(u,v)$, with vertices $u,v \in V$.
Both vertices and edges may have associated weights, denoted $w(v)$ and $w(u,v)$, respectively.
While a general definition of $w$ maps each object to $\mathbb{R}$, we consider only positive integer weighting functions for simplicity.
We define $w(u,v) = 0 \iff (u,v) \notin E$.
The \textit{neighborhood} of a vertex $v$ is the set of adjacencies of $V$, denoted as $N(v)$.
Let $C = (C_1,C_2,C_3,...,C_x)$ denote a clustering, with each cluster $C_i \subset V$.
Each cluster must be pairwise disjoint, and their union equal to $V$.
Let $C[v]$ denote the cluster to which vertex $v$ belongs.
We define $w(X) = \sum_{v \in X} w(v)$ for any vertex set $X$, and $w(X,Y) = \sum_{(u,v) \in X \times Y \setminus X} w(u,v)$ for any two sets of vertices $X,Y$.
Symmetry of undirected graphs implies $w(X,Y) = w(Y,X)$ when $X \cap Y = \emptyset$.
A singleton cluster is a cluster containing only one vertex, while an all-singleton clustering is a clustering consisting of only singleton clusters.
We define the \textit{cutset} of a clustering $C$ as the subset of $E$ such that the two vertices of an edge are in different clusters of $C$.
We define the \textit{envelope} of $C$ (notated $env(C)$) as the subset of $E$ such that both vertices are in the same cluster, with $\textit{cutset}(C) \cup env(C) = E$.

\subsection{LambdaCC}
Graph clustering is typically formulated as an optimization task for one of many possible objective functions.
Many of the most significant objective functions are unified by the lambdaCC objective \cite{lambdacc}:
\begin{equation}
\label{eq:lambdacc}
\begin{split}
    \lambda CC(C) &= \sum_{C_i \in C} \sum_{\substack{u,v \in C_i \\ u \neq v}} (w(u,v) - \lambda w(u)w(v)) \\
                 &= \sum_{(u,v) \in env(C)}w(u,v) -\lambda \sum_{C_i \in C} \sum_{\substack{u,v \in C_i \\ u \neq v}} w(u)w(v)
\end{split}
\end{equation}
It rewards edges within a cluster, while penalizing every pair of vertices within each cluster.
$\lambda$ balances the influence of the reward versus the penalty.
It is in the range $[0,1]$, and may be a function of graph parameters such as $|E|$ or $|V|$.
While the form presented in the original lambdaCC paper \cite{lambdacc} is a minimization objective, we use an equivalent maximization objective.
Maximizing $\lambda CC(C)$ on $G$ is equivalent to maximizing $\sum_{C_i \in C} \sum_{{\substack{u,v \in C_i \\ u \neq v}}} w'(u,v)$ on a \emph{lambdaCC signed graph}. This is a fully connected, undirected graph with edge weights given by $w'(u,v) = w(u,v) - \lambda w(u)w(v)$.  
$w'(u,v) > 0$ only if $(u,v) \in E$. 

$w'$ is a useful concept for defining clustering properties and objective deltas.
We will often use $w'$ with respect to vertex sets $X,Y$: $w'(X,Y) = w(X,Y) - \lambda w(X)w(Y \setminus X)$.
$2w'(v,C[v])$ gives the individual contribution of a vertex $v$ to the overall objective,
while $2w'(v,C_i) - 2w'(v,C[v])$ gives the objective delta related to moving vertex $v$ to cluster $C_i$.

With proper choice of $w$ and $\lambda$, one can obtain objectives including modularity \cite{modularity}, the Constant-Potts Model (CPM) \cite{cpm_traag}, the Absolute-Potts Model (APM) \cite{BRSLLP}, label propagation ($\lambda = 0$) \cite{raghavan2007near}, sparsest-cut, cluster deletion, and connected components ($\lambda = \epsilon$). Specifically, for modularity, $w(v) = |N(v)|$, $(u,v) \in E \iff w(u,v) = 1$, and $\lambda = \frac{\gamma}{2|E|}$; $\gamma$ is the modularity resolution parameter.
It is NP-Hard to optimize or approximate this objective for general $w$ and $\lambda$ \cite{lambdacc}, which it inherits from modularity \cite{modularity_npcomplete} (where these are only NP-Complete) via a reduction.

\subsection{Leiden Guarantees}
Leiden provides six guarantees at different timescales, and defines each relative to an objective function.
References to $\gamma$ in each property's name are related to the resolution parameter of the Constant Potts Model, but each term can be equivalently defined relative to the lambdaCC signed graph.\\
\textit{$\gamma$-separation}: For any two clusters $C_x,C_y$ in a clustering, $w'(C_x,C_y) \leq 0$.\\
\textit{$\gamma$-connectivity}: Every cluster $C_x$ can be decomposed into a binary tree of vertex-sets, such that every set is a subset of its parent and the union of its children. Every leaf node represents a unique vertex in $C_x$. For every node with two children $X_1,X_2$, $w'(X_1,X_2) \geq 0$.\\
\textit{Subpartition $\gamma$-density}: $\gamma$-connectivity with one extra condition: every node $X$ in the tree must satisfy $w'(X,C_x) \geq 0$.\\
\textit{Node optimality}: For every vertex $v$ and cluster $C_x$, $w'(v, C[v]) \geq w'(v,C_x)$ and $w'(v, C[v]) \geq 0$.\\
\textit{Uniform $\gamma$-density}: For every cluster $C_x$ and every subset $X$ of that cluster, $w'(X,C_x) \geq 0$.\\
\textit{Subset Optimality}: For every pair of clusters $C_x,C_y$ and every subset $X$ of $C_x$, $w'(X,C_x) \geq w'(X,C_y)$ and $w'(X,C_x) \geq 0$.

This concise restating of Leiden guarantees in lambdaCC notation is a new contribution of this paper.

\section{Background}
\subsection{Multilevel Clustering Algorithms}
Heuristic algorithms are the predominant approach for optimizing lambdaCC-related objectives on real-world graph data, due to its computational hardness.
The Louvain algorithm \cite{louvain_base} is one of the most widely used algorithms, as it produces high-quality clusterings in linear time.
Louvain exploits the common hierarchical structure of clusters in real-world networks via a a multilevel algorithm.
First, a \emph{local move heuristic} greedily optimizes an initial all-singleton clustering.
This local move heuristic is computationally equivalent to running the label propagation algorithm \cite{raghavan2007near} on the lambdaCC signed graph, up to the choice of initial clustering state and vertex traversal order.
Second, it generates a coarsened graph from the optimized clustering, such that each coarse vertex represents an entire cluster.
Each coarse vertex $x$ corresponding to cluster $C_x$ has weight equal to $w(C_x)$.
Let $x,y$ represent two coarse vertices corresponding to clusters $C_x,C_y$, respectively.
There are edges $(x,y)$ and $(y,x)$ in the coarse graph with weight $w(C_x,C_y)$, if and only if $w(C_x,C_y) > 0$.
This process terminates when the clustering is $\gamma$-separated.
Every clustering on a coarsened graph induces a clustering on the input graph;
the objective on each coarsened graph is monotonic with respect to the objective on the input graph, and the difference is a constant.
The original Louvain algorithm avoids a delta between coarse and input objectives by using weighted self-loops for every coarse vertex.
This is unnecessary; instead, one may sum the constant objective deltas between each successive coarse graph.

One notable extension of Louvain, Louvain+ \cite{louvain+}, adds an \textit{uncoarsening} phase, which applies the local move heuristic again to the result of each recursive call to further improve the clustering.
Refer to Algorithm~\ref{alg:louvain} for an overview.
VieClus \cite{BiedermannHSS18}, a memetic algorithm for clustering, and a multicore CPU parallel work \cite{shi2021scalable} implement this innovation.
When discussing our implementation of Louvain+, we refer to the \textit{coarsening} phase, which is everything that occurs prior to the uncoarsening phase.
Separately, an analysis \cite{multilevel_cluster} of multilevel clustering variants verified the importance of local moving during the uncoarsening. 

The Leiden algorithm \cite{leiden_base} is another improvement upon Louvain, motivated by the observation that Louvain generates poorly internally-connected clusters.
To address this, it applies a refinement heuristic after the local move heuristic.
This refinement is similar to \emph{solution-based coarsening} schemes from graph partitioning \cite{Walshaw2004-ek}: it exclusively generates sub-clusters of those given by the local move heuristic.
The output of the refinement is used for the graph coarsening operation, and the output of the local move heuristic is used to initialize the clustering of the coarse graph, instead of a singleton clustering.
Leiden can be fed any input clustering as a starting point; by feeding itself its own output, it can be \emph{iterated}.
This is also true of Louvain, but the Leiden authors claim it rarely benefits from more than one additional iteration \cite{leiden_base}.
We will later demonstrate an alternative way to iterate Louvain that is superior to the one considered by the Leiden authors.

After every iteration, Leiden guarantees $\gamma$-separability and $\gamma$-connectivity of the clustering.
After a stable iteration (one which does not improve the clustering), Leiden guarantees subpartition $\gamma$-density and node optimality.
It guarantees uniform $\gamma$-density and subset optimality asymptotically, via randomization.

\begin{algorithm}[htbp]
\caption{Overview of the Louvain and Louvain+ algorithms.}
\label{alg:louvain}
\begin{algorithmic}[1]
\Require Undirected $G(V,E) = G_0$. $n$ = $|V|$.
\Ensure Vector $C_{out}$ mapping $V$ to $\{1, 2, ..., c\}$ for some integer $c \leq |V|$.
\State $l \gets 0$
\While {\textsc{True}}
    \State $C_l \gets $ \textsc{SingletonClustering($V_l$)}
   \State  $C_l \gets$ \textsc{LocalMovePasses}($G_l$, $C_l$)
   \If {$|C_l| = |V_l|$}
   \State {Break out of while loop} 
   \EndIf
    \State $G_{l+1} \gets$ \textsc{GraphContraction}($G_l$, $C_l$)  
    \State $l \gets l + 1$
\EndWhile
\While {$l - 1 \geq 0$} 
\State $l \gets l-1$
    \State $C_l \gets$ \textsc{ProjectCluster}($C_l, C_{l+1}$)
    \State  $C_l \gets$ \textsc{LocalMovePasses}($G_l$, $C_l$) \Comment{only Louvain+}
\EndWhile
\State $C_{out} \gets C_0$

\end{algorithmic}
\end{algorithm}

\subsection{Parallel Implementations}
Most research on parallel Louvain methods focuses on the local move heuristic.
The local move heuristic presents two primary challenges \cite{d2_is_louvain} for parallel implementations that motivate a variety of approaches.
These challenges are: 1) race conditions due to vertex state updates, and 2) destructive interactions within a set of simultaneous vertex moves that reduce its cumulative objective delta.
Synchronous approaches avoid race conditions, but require symmetry-breaking techniques to mitigate destructive interactions.
Bulk synchronous parallel (BSP) \cite{distributed_louvain_1} approaches process all vertices simultaneously and do not update vertex states until after processing every vertex.
Such approaches use explicit symmetry-breaking heuristics, like the minimum label heuristic (MLH) \cite{gve_louvain, Sahu2025-ye, forster_gpu}.
Some synchronous approaches \cite{batched_louvain, wang_gpu, grappolo, d2_is_louvain} use batches; vertices within a batch are concurrently processed, one batch at a time\footnote{In \cite{batched_louvain}, multiple batches are concurrently processed, one batch per GPU }, and apply state updates between batches.
Techniques for batching include distance-1 coloring \cite{grappolo}, distance-2 coloring \cite{d2_is_louvain}, degree-based bucketing \cite{wang_gpu}, and partitioning \cite{batched_louvain}.
Batches provide an implicit form of symmetry-breaking, but limit the potential degree of parallelism.
The different batching tactics and explicit symmetry-breaking heuristics address the problem of destructive interactions to varying degrees.

A third approach is asynchronous parallel \cite{Sahu2025-ye, gve_louvain, shi2021scalable}, wherein vertex states are updated as they are processed.
Prior work has shown that asynchronous algorithms converge more quickly than synchronous algorithms \cite{shi2021scalable}, presuming a CPU-level degree of concurrency.
Unfortunately, asynchronous algorithms are subject to race conditions, which lead to quality losses and slower convergence with increasing concurrency.
They are therefore ill-suited to GPUs, and our evaluation of such approaches in the empirical results section shows this.

Several works \cite{Sahu2025-ye, batched_louvain, forster_gpu, wang_gpu, grappolo_gpu} have examined parallel Louvain for GPUs.
Each of these approaches utilizes one of the aforementioned schemes.
We detail the two most efficient approaches of these, v-Louvain \cite{Sahu2025-ye} and GALA \cite{wang_gpu}.
v-Louvain utilizes a asynchronous approach for the local move heuristic, while GALA uses a batch-synchronous approach.
Both group vertices into buckets by their degree, to enable processing with specialized kernels.
v-Louvain uses two buckets, and processes low-degree vertices with a single thread, and high-degree vertices with entire thread blocks.
GALA uses more buckets with greater granularity; its kernels are more specialized towards each bucket in terms of thread-block size, warp-level primitive usage, and shared vs global memory usage.
For symmetry breaking, v-Louvain integrates the MLH in every fourth pass, whereas GALA relies only on batching.
GALA additionally implements a pruning scheme to reduce the total amount of work in each pass.
Regarding graph contraction, both first gather vertices by cluster id (GALA does not mention this in their paper, but we confirm via inspection of their code).
We discuss the detriments of this preprocessing in section \ref{section:our_contraction}.

The Leiden algorithm is significantly more challenging to implement in parallel, due to its quality guarantees.
We highlight that no existing parallel implementation provides each of the original quality guarantees.
GVE-Leiden's \cite{leiden_sahu} parallel version of Leiden's refinement has race conditions which lead to violations of the guarantees.
A prior implementation (NetworKit Leiden) \cite{nguyen_leiden} locks entire clusters to avoid such a scenario, which severely restricts the attainable degree of parallelism;
the author of that work stated that speedups beyond 32 threads on a 128-core system were negligible.
Furthermore, its refinement implementation lacks randomization of the cluster joining operation, so it fails to guarantee subset optimality and uniform $\gamma$-density asymptotically.
While GVE-Leiden's authors observed that NetworKit Leiden produced disconnected clusters \cite{leiden_sahu}, this was fixed in v11.2.

\section{Parallel Louvain}
We utilize the modified Louvain structure explored in \cite{louvain+}, which we outline in Algorithm \ref{alg:louvain}.
Line 12 represents the primary distinction between Louvain+ and Louvain.
There are two reasons we choose this approach:
\begin{enumerate}
    \item The quality as measured by modularity is improved versus standard Louvain \cite{louvain+, shi2021scalable}.
    \item The local move heuristic is cheaper during the uncoarsening phase. There are far fewer clusters to consider, and a smaller proportion of vertices are not node optimal.
\end{enumerate}
There is an opportunity to replace some of the expensive passes of the local move heuristic during the coarsening with cheaper passes during the uncoarsening.
This is especially important for synchronous approaches such as the one we outline, which require more passes to converge than asynchronous approaches favored by CPU-oriented algorithms.
Each of our algorithms assumes the graph is stored in memory in the compressed sparse row (CSR) format.

\subsection{Local Move Heuristic}
We implement the local moving phase of the Louvain algorithm with a bulk-synchronous design.
See Algorithm~\ref{alg:local_move} for a sketch of our approach.

\subsubsection{Afterburner Filter}
We adapt the Jet label propagation algorithm \cite{jet} to consider lambdaCC objectives.
It integrates a symmetry-breaking technique called the \emph{afterburner filter} (see algorithm \ref{alg:after_filter}), which deprioritizes destructive interactions between adjacent vertices and prioritizes constructive interactions.
In comparison, the simpler MLH entirely forbids simultaneous moves which destructively interact via their adjacencies.
By considering constructive interactions, the afterburner filter can move vertices which are in a locally-optimal cluster (governed by the $\phi$ parameter), for potential global objective improvements.
Most significantly, as our empirical evaluation shows, this enables some basic simulated annealing techniques as in \cite{dkaminpar_jet} and \cite{detJetHyper}.
The afterburner filter does not consider interactions between non-adjacent vertices, a common limitation of the symmetry-breaking heuristics of which we are aware.

The ordering referenced on line 5 is a function of the precomputed objective deltas for each candidate move: greater objective deltas are earlier in the ordering.
If two candidate moves have objective deltas that differ by less than some $\epsilon$ (which we set as 0.1), then tiebreaking is performed by a hash function of the vertex ids.
The loop beginning on line 4 specifies a thread-block level reduction on the $\delta_v$ variable.

\begin{algorithm}[htbp]
\caption{pLouvain: Parallel local move heuristic.}
\label{alg:local_move}
\begin{algorithmic}[1]
\Require $G = (V,E)$. Vector $C_{in}$. Temperature parameter $\phi$.
\Ensure Vector $C_{out}$.
\State $C \gets C_{in}$, $C_{out} \gets C_{in}$
\If{$C$ is an all-singleton clustering}
\State $DS \gets G$ \Comment{$DS$ tracks $w(v,u)$}
\Else
\State $DS \gets $ \textsc{Enumerate}($G,C$) \Comment{$DS$ tracks $w(v,C_x)$}
\EndIf
\For{$i = 1,\dots,\mathit{limit}$}
\State $L \gets$ [] \Comment{empty movelist}
\State $D \gets $ [null]*$|V|$ \Comment{destination clusters}
\For{$v \in V$ \textbf{in parallel}}
\State $D[v] \gets $ argmax $C \setminus C[v]$ of $w'(v, C_i)$
\If{$w'(v,C[v]) < 0$ \textbf{and} $w'(v,D[v]) < 0$}
\State $D[v] \gets \emptyset$ \Comment{new singleton cluster}
\EndIf
\If{$w'(v,D[v]) \geq (1 - \phi_i) w'(v,C[v])$}
\State append $v$ to $L$
\EndIf
\EndFor
\State $L \gets $ \textsc{AfterburnerFilter}($G,L,C,D$)
\If{$|L| > 0.05 |V|$ and $C$ not an all-singleton clustering}
\State $DS \gets $ \textsc{UpdateAffected}($G,L,C,D$)
\Else
\State $DS \gets $ \textsc{RecomputeAffected}($G,L,C,D$)
\EndIf
\State $C \gets $ \textsc{Update}($L,C,D$) \Comment{Apply $D$ for $L$}
\If{$\lambda CC(C) > \lambda CC(C_{out})$}
\State $C_{out} \gets C$
\EndIf
\EndFor
\end{algorithmic}
\end{algorithm}

\begin{algorithm}[htbp]
\caption{The ``afterburner'' filter used in local move.}
\label{alg:after_filter}
\begin{algorithmic}[1]
\Require $G = (V,E)$. Candidate moves $L$. Current clustering $C$. Destination clusters $D$.
\Ensure Filtered moves $M$.
\State $M \gets$ []\Comment{empty movelist}
\For{$v \in L$ \textbf{in parallel}}
\State $\delta_v \gets w'(v,D[v]) - w'(v,C[v])$ \Comment{Precomputed}
\For{$u \in N(v)$ \textbf{in parallel}}
\If{$u \in L$ and $ord(u) < ord(v)$}
\If{$C[v] = C[u]$}
\State $\delta_v \gets \delta_v + w'(u,v)$
\ElsIf{$D[v] = C[u]$}
\State $\delta_v \gets \delta_v - w'(u,v)$
\EndIf
\If{$C[v] = D[u]$}
\State $\delta_v \gets \delta_v - w'(u,v)$
\ElsIf{$D[v] = D[u]$}
\State $\delta_v \gets \delta_v + w'(u,v)$
\EndIf
\EndIf
\EndFor
\If{$\delta_v \geq 0$}
\State append $v$ to $M$
\EndIf
\EndFor
\end{algorithmic}
\end{algorithm}

\subsubsection{Kernel Fission}
The most costly subtask of the local move heuristic is the enumeration of adjacent clusters and respective $w(v,C_i)$ for each vertex, followed closely by the argmax operation to select the best destination cluster.
The standard approach across several recent works \cite{Sahu2025-ye, wang_gpu} fuses these into one kernel.
We differ from these other works by employing kernel fission \cite{hijma2023optimization} to create separate kernels for the enumeration and argmax operations.
This enables further tuning and optimizations on a per kernel basis, which are incompatible with a single fused kernel.
There is a small tradeoff of one additional sequential read from global memory in the argmax kernel.

\subsubsection{Enumeration Kernel}
As in \cite{Sahu2025-ye}, we maintain a hash table in global memory for each vertex, with size given by its degree.
The total capacity among all hash tables is equal to $|E|$;
similar allocations are common for graph partitioning \cite{jet, terapart}.
We designate a special memory location for each vertex to accumulate its connection strength to its current cluster.
This enables the use of a reduction operation to accumulate this value, in place of high-contention atomic operations.
For vertices within a viable range of degrees, we build the hashtables in shared-memory before copying them to global memory;
all other vertices use global memory exclusively.

After each pass, we recompute the hash tables of each moved vertex and its adjacencies.
If the number of vertex moves is small (less than 3-5\% of all vertices), it is beneficial to instead modify the hash tables of the adjacent vertices directly in global memory.
This specialized kernel can't be fused with the argmax kernel.
It is most relevant to the uncoarsening phase, which usually sees only a small fraction of all vertices undergo a move.

For the first pass of the local move heuristic during the coarsening, the adjacent clusters for each vertex will be exactly equal to its adjacent vertices.
Thus, in the first pass only, we skip the enumeration kernel and pass the input graph directly to the argmax kernel.

\subsection{Graph Contraction}
\label{section:our_contraction}
We adopt an approach to parallel graph contraction defined in a recent graph partitioning work \cite{jet}.
This approach is given in Algorithm~\ref{alg:contraction}.
The central idea is to create hash tables for each coarse vertex, and then for each edge in the input graph, insert the corresponding coarse edge into the appropriate hash table.
Edges inducing self-loops in the coarse graph are ignored, as cluster sizes are tracked separately.
The sum of degrees of the constituent input vertices for each coarse vertex gives a loose upper bound for its hash table size.
For typical input graphs, the number of insertions into each hash table is at least one order of magnitude smaller than this upper bound.
This upper bound tightens with each coarsening phase.

To extract the edges of $G_c$ from the hash tables, we perform a single stream-compaction operation on the unified hash table arrays to select non-null entries.
Before this, we need to count the number of such entries to allocate memory for $G_c$, and on a per-coarse-vertex basis to compute the coarse offsets.
We integrate the counting of unique entries into the hashtable insertion operation.

In contrast to v-Louvain \cite{Sahu2025-ye} and GALA \cite{wang_gpu}, we don't gather fine vertices into groups by coarse vertex membership.
Gathering vertices in this way is common for sequential contraction schemes, where it enables the stream-compaction operation to occur concurrently with the hashtable operations.
In a non-sequential context, this optimization isn't possible.
The prior GPU works allocate work units per coarse vertex;
a work unit can be a single thread, a warp, a thread-block, or a full thread-grid, depending on the sum of input degrees.
Compared to our approach, in which we allocate work units per fine vertex, this exacerbates the load imbalance between work units, and reduces the potential for latency hiding.

\begin{algorithm}[htbp]
\caption{The parallel graph contraction algorithm.}
\label{alg:contraction}
\begin{algorithmic}[1]
\Require $G = (V, E)$. The cluster array $C$. Cluster count $n_c$.
\Ensure $G_c = (V_c,E_c)$
\State bound $\gets [0]*(n_c + 1)$
\For{$v \in V$ \textbf{in parallel}}
    \State bound[$C[v]$] $\gets$ bound[$C[v]$] + $|N(v)|$
\EndFor
\State offset $\gets $ \textsc{ExclusivePrefixSum}(bound)
\State $H_{k} \gets [\textnormal{null}]*|E|$ \Comment{per-vertex hash tables}
\State $H_{v} \gets [0]*|E|$
\For{$v \in V$ \textbf{in parallel}}
    \State $h_{k} \gets $ $H_{key}$[offset[$C[v]$]..offset[$C[v]+1$]]
    \State $h_{v} \gets $ $H_{val}$[offset[$C[v]$]..offset[$C[v]+1$]]
    \For{$u \in N(v)$ \textbf{in parallel}}
        \If{$C[v] \neq C[u]$}
        \State $i \gets $ \textsc{InsertOrLookup}($h_{k}$, $C[u]$)
        \State $h_{v}[i]$ $\gets$ $h_{v}[i]$ + $w(u,v)$
        \EndIf
    \EndFor
\EndFor
\State $G_c \gets $ \textsc{StreamCompaction}($H_{k}, H_{v}$)
\end{algorithmic}
\end{algorithm}

\subsection{Miscellaneous Detail}

\subsubsection{Memory Reuse}
In order to reduce the number of large memory allocations and deallocations, we reuse memory where possible.
The memory for the hash tables needed for the local move heuristic is preserved and reused for all subsequent calls, as the coarser graphs will never need more memory for this than the input graph.
Additionally, we reuse this same memory for the hash tables in the graph contraction operation.
The only memory which must be allocated dynamically is for storing each coarse graph; all other auxiliary memory can be allocated in advance.

\subsubsection{Degree-Specialized Kernels}
Like v-Louvain \cite{Sahu2025-ye} and GALA \cite{wang_gpu}, we group vertices into buckets by degree for processing by kernels tuned to each bucket.
We use three buckets, for low, medium, and high degree vertices.
We specialize the enumeration, argmax, afterburner, and contraction kernels, with the argmax and afterburner kernels using higher thresholds for each bucket than the other two.

\subsection{Note on Iterating Louvain+}

We detail a new method for iterating Louvain, using Louvain+.
The old method, detailed by the Leiden authors \cite{leiden_base}, iterates Louvain by feeding it an input clustering, which in turn is fed into the local move heuristic in place of an all-singleton clustering.
VieClus \cite{BiedermannHSS18} implies an alternate approach, but we do not believe this was explicitly detailed.
In that work, their ``Multi-level Recombination Operator'' combines two clusterings $B_1$ and $B_2$, using a constrained version of Louvain+.
Its local move heuristic begins from an all-singleton clustering, but it may only create clusters $C_x$ that satisfy $B_1[u]=B_1[v]$ and $B_2[u]=B_2[v]$ for every vertex pair $u,v \in C_x$.
An uncoarsening phase allows for inter-constraint-cluster vertex movements.

Our new iteration method is similar, with one constraint clustering in place of two.
It differs in that the constraint clustering is not further used to give an initial clustering of the coarsest graph.
The new method is conceptually similar to iterated multilevel graph partitioning methods \cite{Walshaw2004-ek}.
While the Leiden paper showed that iterating Louvain in the former way worsens the internal connectivity of clusters,
our new approach guarantees connected clusters in stable iterations.
In a sequential setting, the coarsening phase alone always generates a clustering at least as good as the constraint.
This is due to the coarsening termination condition: any two clusters in the coarsening output satisfy $w'(C_x,C_y) \leq 0$, unless $C_x$ and $C_y$ are in different constraint clusters.

\section{Parallel Method for Leiden Guarantees}

Algorithm~\ref{alg:leiden} gives an overview of pLeiden, our novel Leiden parallelization, and can be compared to Algorithm~\ref{alg:louvain}.

\begin{algorithm}[htbp]
\caption{pLeiden: Overview of our parallel algorithm.}
\label{alg:leiden}
\begin{algorithmic}[1]
\Require $G = (V,E)$. Vector $C_{in}$.
\Ensure Vector $C_{out}$.
\State $l \gets 0, C_0 \gets C_{in}$
\While {\textsc{True}}
    \State $\delta \gets \lambda CC(C_l)$ 
   \State  $C_l \gets$ \textsc{LocalMovePasses}($G_l$, $C_l$)
   \If{$\lambda CC(C_{l}) = \delta$}
\State $C_l \gets $ \textsc{LocalMoveAlt}($G,C_l$)
\EndIf
   \If {$|C_l| = |V_l|$}
   \State {Break out of while loop} 
   \EndIf
\State $B \gets $ \textsc{pLeidenR}($G_l, C_l$)
    \State $G_{l+1} \gets$ \textsc{GraphContraction}($G_l$, $B$)
    \State $C_{l+1} \gets $ \textsc{Downsample}($G_{l+1}, C_l$)
    \State $l \gets l + 1$
\EndWhile
\While {$l - 1 \geq 0$} 
\State $l \gets l-1$
    \State $C_l \gets$ \textsc{ProjectCluster}($C_l, C_{l+1}$)
\EndWhile
\State $C_{out} \gets C_0$
\end{algorithmic}
\end{algorithm}

\subsection{Parallel Local Move Phase}
\label{section:alt_plm}
The only necessary trait of the local move heuristic for Leiden's proofs \cite{leiden_base} is that it return an improved clustering unless the input is node optimal.
Our standard parallel implementation cannot do this, so we provide an alternate one that does.
Essentially, we require a stricter version of the afterburner filter.
First, we gather all non-node-optimal vertices in some arbitrary order, and determine their ideal clusters.
Our stricter afterburner considers all prior vertices in the order, not just those immediately adjacent, which is necessary for lambdaCC objectives.
We determine the prefix of the order which maximizes the change in the objective function, and commit this prefix.
This maximizing prefix is non-empty if there are any non-node-optimal vertices.

This is possible in $O(n)$ total work with some complex usage of prefix sums, while the obvious $O(n^2)$ approach is much simpler.
We only invoke this alternate local move implementation when our standard local move implementation fails to improve the clustering (see line 6 of Algorithm \ref{alg:leiden}).
In our testing, the number of non-node-optimal vertices when this occurs was rarely more than 40.
Thus, the simpler $O(n^2)$ approach is likely faster than the more complex, higher overhead $O(n)$ approach.

\subsection{Parallel Refinement Heuristic}
\subsubsection{Original Leiden Refinement Heuristic}
We hereafter refer to Leiden's refinement heuristic as LeidenR; note that the Leiden authors denote it \textsc{RefinePartition}.
LeidenR is based upon cluster joining \cite{multilevel_cluster};
it performs a sequence of non-negative moves such that the size of its clusters (except for singleton clusters) strictly increase.
All optimal clusterings are reachable in this manner \cite{leiden_base}.
Beginning from a singleton clustering, LeidenR visits each vertex in a random order.
If it visits a vertex in a singleton cluster, it randomly chooses to move it to an adjacent cluster such that the objective does not decrease ($\gamma$-connectivity).
It can also randomly choose not to move the vertex.
Upon visiting a vertex not in a singleton cluster, it skips it.

LeidenR also utilizes a constraint clustering $B$, such that vertices can only move to clusters in their same constraint cluster.
Additionally, LeidenR ignores vertices not satisfying $w'(u,B[u]) \geq 0$, and also does not move vertices to clusters not satisfying $w'(C_x,B[C_x]) \geq 0$.
This is necessary for subpartition $\gamma$-density in stable iterations.

\subsubsection{Our Parallel LeidenR Algorithm (pLeidenR)}
Our parallel algorithm (algorithm \ref{alg:pleidenR}) first generates a forest of spanning trees; each tree induces a cluster.
It then gathers the vertices of each spanning tree and sorts them by an ordering.
Once sorted, we can ensure that the induced clusters satisfy LeidenR's aforementioned criteria.

To generate the spanning forest, we select a random edge $(u,v)$ from $u \in N(v) \cup v$ for each vertex $v$ such that $B[u]=B[v]$, $w'(u,v) \geq 0$, and $w'(u,B[u]) \geq 0$.
We then compute a random ordering function mapping each vertex to $[0, K)$ for some large integer $K \gg |V|$.
We subsample the edges going from a higher order vertex to a lower order vertex, to eliminate cycles.
For vertices in singleton trees, we modify their ordering by $f_{new}(v) = 2K - f_{old}(v)$.
This eliminates most of the singleton trees, except those vertices that selected self-loops.
Each tree has expected logarithmic depth due to the random ordering function.
We can extract them efficiently in parallel via pointer-chasing.
Sorting the vertices of each tree uses the same ordering function, so that each prefix of a sorted tree represents a connected tree.

\begin{lemma}
\label{lem:versa}
pLeidenR can generate a clustering only if LeidenR can generate it.
\end{lemma}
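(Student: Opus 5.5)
Given any output of pLeidenR (a spanning forest $F$ whose trees induce clusters, together with the random ordering $f$ after the singleton adjustment), I would exhibit an explicit execution of LeidenR that produces the same clustering. Since LeidenR makes random choices (visiting order, which admissible cluster to join, or whether to stay put), it suffices to show that some sequence of these choices yields the pLeidenR clustering. The natural candidate is to let LeidenR visit the vertices in increasing order of $f$, so that vertices of the same tree appear in their sorted order. Within each tree, the root (minimum-order vertex) stays in its singleton cluster. Every later vertex $v$ joins the cluster containing its chosen parent $p(v)$, where $f(p(v))<f(v)$ because cycle-eliminating subsampling keeps only edges from higher- to lower-order vertices. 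Vertices that selected self-loops never move.

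The key steps are as follows.

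\textbf{Step 1: the joins are legal for LeidenR.} When $v$ is visited it must still be a singleton. No vertex ever moves into $v$'s cluster before $v$ is visited, because every child $c$ of $v$ has $f(c)>f(v)$ and is therefore visited later. The root may be visited after the $f_{new}=2K-f_{old}$ modification, but that modification only applies to singleton trees, so it does not affect this argument.

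\textbf{Step 2: the target is admissible.} When $v$ is visited, the cluster $X$ containing $p(v)$ equals the set of already-processed tree vertices, and by the prefix property it is the connected prefix of the sorted tree. The following conditions hold by the edge-selection criteria of pLeidenR: $B[v]=B[p(v)]$, $w'(v,p(v))\ge 0$, and $w'(v,B[v])\ge 0$. However, LeidenR requires more. It needs $w'(v,X)\ge 0$ (a non-decreasing objective), and it needs the target cluster to satisfy $w'(X,B[X])\ge 0$. I would verify these from the definition of $w'$ on the lambdaCC signed graph, using the per-edge conditions along the tree.

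\textbf{Step 3: the singleton-cluster path.} For any randomized choice among admissible clusters, LeidenR assigns positive probability to the specific choice made here, and it may also choose not to move. Hence the whole execution has positive probability, and the resulting LeidenR clustering coincides with the tree-induced clustering.

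I expect Step 2 to be the main obstacle. A single nonnegative edge to the parent does not obviously give $w'(v,X)\ge 0$, because the penalty term $\lambda w(v)w(X)$ grows with the prefix. My first attempt would be to reread the precise definition of the sorting and filtering in pLeidenR, since ``we can ensure that the induced clusters satisfy LeidenR's criteria'' suggests that sorted prefixes may be truncated where the criterion fails, with the tree split there. I would then argue that each committed join satisfies the admissibility check by construction. If that filtering is what gives admissibility, then Step 2 reduces to checking that the truncation is performed exactly with LeidenR's test, evaluated on the current prefix cluster.
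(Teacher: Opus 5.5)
Once you adopt the truncation reading from your last paragraph, your proposal is essentially the paper's proof. LeidenR visits vertices in increasing $f$, and each vertex joins the cluster formed by the lower-$f$ vertices of its own tree. \textsc{CommitWellConnectedPrefixes} explicitly checks $w'(v,C_x)\ge 0$ and $w'(C_x,B[C_x])\ge 0$ along each sorted tree and commits the longest valid prefix. It turns the \emph{entire} remaining suffix into singletons, which LeidenR reproduces with its ``do not move'' option, not by splitting the tree into further multi-vertex clusters.

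Your first idea for Step 2 cannot work. Take unit weights, $\lambda=0.6$, $B$ the $4$-cycle $p$--$q$--$r$--$v$--$p$, and tree edges $q\to p$, $v\to p$: every per-edge criterion holds, yet $w'(v,\{p,q\})=-0.2$. So the lemma rests entirely on that explicit verification step, exactly as the paper's ``by construction'' argument does.
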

\begin{proof}
Assume that LeidenR visits the vertices $v \in V$ in increasing order of $f(v)$, and moves each $v$ to the cluster $C_x$ containing the lesser $f$-valued vertices in the same tree as $v$.
\textsc{CommitWellConnectedPrefixes} simulates LeidenR under this assumption, verifying if LeidenR could perform each move.
More specifically, it verifies $w'(v,C_x) \geq 0$ and $w'(C_x, B[C_x]) \geq 0$.
This requires two separate prefix sums, one over $w(C_x)$ and one over $w(C_x, B[C_x])$.
The longest prefix of valid moves of each tree is committed as a cluster; the remaining suffix is broken into singleton clusters (assume that LeidenR chooses not to move the suffix vertices).
Thus, by construction of the algorithm, the lemma is true.
\end{proof}

\begin{lemma}
\label{lem:vice}
pLeidenR can generate a clustering if LeidenR can generate it.
\end{lemma}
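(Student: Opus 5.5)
Fix an arbitrary run of LeidenR: a visiting order $\pi$ on $V$ and a sequence of random choices, producing the clustering $R$. The plan is to build a choice of pLeidenR's random components (the selected edges and the ordering function $f$) that reproduces $R$ exactly. These components are all sampled with positive probability over finite or large ranges, so this shows that pLeidenR can generate $R$.

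\textbf{Step 1: encode LeidenR's run as a forest.} During the run, each move takes a singleton vertex $v$ and puts it into a cluster $C_x$ adjacent to $v$. For each such move, pick a neighbor $p(v)\in C_x\cap N(v)$. The natural choice is a neighbor $u$ with $w'(u,v)\ge 0$, which exists whenever $w'(v,C_x)\ge 0$ and $\lambda$ penalties are nonnegative; one should double check this point. Vertices that never move, and vertices that are first in their cluster, select their self-loop. Since $p(v)$ was already in $C_x$ when $v$ was visited, we have $\pi(p(v))<\pi(v)$. Hence the parent pointers are acyclic and form a spanning forest whose trees are exactly the clusters of $R$. We also check that each chosen edge meets pLeidenR's eligibility conditions:
\begin{itemize}
\item $B[u]=B[v]$, since LeidenR respects the constraint;
\item $w'(u,v)\ge 0$, by the choice of $p(v)$;
\item $w'(u,B[u])\ge 0$, since LeidenR ignores vertices violating this, so $u$ could not have started or joined a cluster otherwise.
\end{itemize}

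\textbf{Step 2: choose $f$ consistent with $\pi$.} Set $f(v)=\pi(v)$, scaled into $[0,K)$. Every selected edge then points from higher to lower order, so none are subsampled away and the forest survives intact. Vertices that chose self-loops stay singleton or act as roots, so the singleton re-ordering $f_{new}=2K-f_{old}$ only touches trees of size one. Those trees are committed as singletons regardless, so the re-ordering is harmless.

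\textbf{Step 3: show \textsc{CommitWellConnectedPrefixes} commits each whole tree.} Within a tree, sorting by $f$ reproduces the order in which LeidenR added those vertices. The key observation is that $C_x$'s membership at the moment $v$ is added depends only on vertices of that same tree with smaller $\pi$. Other trees are disjoint clusters, so they do not affect $w'(v,C_x)$ or $w'(C_x,B[C_x])$. Therefore the checks pLeidenR performs on each prefix are exactly the checks LeidenR passed, every prefix is valid, and the entire tree is committed as a cluster.

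\textbf{Main obstacle.} The hardest step is Step 1's edge selection: showing there is always a single neighbor $u\in C_x$ with $w'(u,v)\ge 0$, given only $w'(v,C_x)\ge 0$. Since $w'(v,C_x)=\sum_{u\in C_x} w'(u,v)$, at least one term must be nonnegative, and because $w'(u,v)>0$ only for edges, such a term is an edge whenever the sum is strictly positive. The zero case, where $w'(v,C_x)=0$ with all terms possibly nonpositive, is degenerate. I would handle it by arguing that a term equal to zero still qualifies under the $\ge 0$ criterion, or by appealing to positive integer weights and $\lambda>0$ to rule out a zero sum from non-edges.
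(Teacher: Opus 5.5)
There is a genuine gap in Step~1, and it carries into Steps~2 and~3. You infer $\pi(p(v))<\pi(v)$ from the fact that $p(v)$ was already in $C_x$ when $v$ was visited. In LeidenR, however, being in a cluster does not imply having been visited. A singleton $v$ may join the singleton cluster $\{u\}$ of a vertex $u$ that LeidenR has not visited yet. Indeed, whenever the first visited vertex moves, every candidate cluster is of this form. Then $C_x=\{u\}$, so $p(v)=u$ is forced and $\pi(u)>\pi(v)$.

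With $f=\pi$, the edge $H[v]=(u,v)$ runs from lower to higher order and is subsampled away. The $2K-f$ re-ordering does not repair this in general. Suppose a later joiner $w$ has $v$ as its only nonnegative neighbour in the cluster. Then $v$ becomes the root of the tree $\{v,w\}$ rather than a singleton tree, and $u$ (which never moves) keeps its self-loop. So pLeidenR returns $\{u\}$ and $\{v,w\}$ instead of LeidenR's cluster $\{u,v,w\}$. For such runs, your claims that no selected edge is subsampled and that sorted tree prefixes coincide with LeidenR's intermediate clusters are false.

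The missing idea, which the paper's proof supplies, is to decouple $f$ from the visiting order for these roots. When $v$ joins the singleton of an unvisited $u$, give $u$ the self-loop and place it at the front of the order (the paper sets $f(u)=0$). This is harmless: when LeidenR eventually visits $u$, it is no longer a singleton and is skipped, so its visiting position never mattered.

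With this fix the argument closes. Each LeidenR cluster has at most one such root, since all other members join at the moment they are visited. Each joiner's chosen neighbour is either the root or an earlier joiner, so every selected edge satisfies $f(u)<f(v)$. Sorting a tree by $f$ then gives the root followed by the joiners in visiting order. That is exactly the sequence of clusters LeidenR checked, and your Step~3 goes through.

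By contrast, the point you flag as the main obstacle is routine. If $\lambda>0$, each non-edge contributes $-\lambda w(u)w(v)<0$ to $w'(v,C_x)$. So a maximal term of a nonnegative sum must be a nonnegative edge term, which is essentially the paper's one-line justification.
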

\begin{proof}
Consider the LeidenR algorithm.
For every vertex $v$, set $f(v) = i$, where $v$ is the $i$th vertex LeidenR will visit.
When LeidenR adds a vertex $v$ to a cluster $X$, choose an edge satisfying $w'(u,v) \geq 0$ with $u \in X$, and set $H[v] = (u,v)$.
Such an edge must exist, otherwise moving $v$ to $X$ has a negative effect on the objective.
Vertex $u$ is either already visited and therefore $f(u) < f(v)$, or will not be in a singleton-cluster when it is visited.
When LeidenR visits $u$ later, set $H[v] = (u,u)$ and $f(u) = 0$.
This ensures all chosen edges satisfy $f(u) < f(v)$.
By giving pLeidenR this $H$ and $f$, it will output the clustering generated by LeidenR.
Observe that pLeidenR may select this $H$ and $f$ randomly with a non-zero probability.
\end{proof}

\begin{theorem}
pLeiden provides all of the original Leiden algorithm's guarantees.
\end{theorem}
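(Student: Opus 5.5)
The plan is to reduce the theorem to the original Leiden proofs \cite{leiden_base}. Those proofs use only a few properties of the components. First, the local move phase returns a strictly improved clustering unless its input is node optimal. Second, the refinement produces exactly the clusterings that LeidenR can produce, with positive probability for each. Third, contraction and downsampling preserve the objective up to an additive constant and preserve cluster membership.

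So I will check each pLeiden component in Algorithm~\ref{alg:leiden} against the corresponding component of sequential Leiden. I will then argue that every guarantee holds because the original argument applies verbatim.

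The steps, in order, are as follows.

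\begin{enumerate}
\item \textbf{Local move.} Standard \textsc{LocalMovePasses} only commits clusterings that improve $\lambda CC$. By line 6, whenever it fails to improve, \textsc{LocalMoveAlt} is invoked. As argued in Section~\ref{section:alt_plm}, the maximizing prefix is nonempty whenever some vertex is not node optimal, so the combined phase improves unless the input is node optimal. This is the only property Leiden's proofs need, and it gives node optimality after a stable iteration. It also gives $\gamma$-separation, since the outer loop terminates only when no coarse vertex moves.

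\item \textbf{Refinement.} Combine Lemma~\ref{lem:versa} and Lemma~\ref{lem:vice}. The set of clusterings pLeidenR can output equals the set LeidenR can output. Moreover, each LeidenR-reachable clustering is output with nonzero probability, by the final remark of Lemma~\ref{lem:vice}.

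\item \textbf{Per-iteration guarantees.} $\gamma$-connectivity follows because each refined cluster is built by a sequence of non-negative joins. Each join adds a vertex $v$ with $w'(v,C_x)\ge 0$, as checked in \textsc{CommitWellConnectedPrefixes}. This yields the required binary tree at every level. Composing the trees across levels of the hierarchy, exactly as Leiden does, gives $\gamma$-connectivity of the final clusters. Subpartition $\gamma$-density additionally uses the check $w'(C_x,B[C_x])\ge 0$, which is verified as one of the prefix sums.

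\item \textbf{Asymptotic guarantees.} Uniform $\gamma$-density and subset optimality in Leiden follow from randomization. In each stable iteration, every LeidenR-reachable refinement, including the one that isolates a violating subset, has positive probability. By step 2, the same positive-probability property holds for pLeidenR, so the Leiden limiting argument transfers.
\end{enumerate}

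I expect the main obstacle to be step 4 together with the transfer in step 3. Leiden's asymptotic argument needs more than ``positive probability.'' It needs a probability bounded away from zero across iterations, so that a violating subset is eventually split off almost surely.

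I would first observe that the graph is finite, so only finitely many pairs (clustering, refinement) exist. The probability that pLeidenR selects a particular $H$ and $f$ is therefore bounded below by a positive constant depending only on $G$. This holds because $f$ is drawn from a finite range $[0,K)$ and $H$ from finitely many edges.

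A secondary subtlety concerns the ordering tweak for singleton trees, $f_{new}(v) = 2K - f_{old}(v)$. I must check that it does not exclude the $H,f$ pairs constructed in Lemma~\ref{lem:vice}. In particular, the construction sets $f(u)=0$ for roots. I would handle this by noting that the lemma's choice has roots selecting self-loops, so their ordering is never modified.
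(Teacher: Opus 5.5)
Your proposal follows the paper's proof: Lemmas~\ref{lem:versa} and~\ref{lem:vice} show that pLeidenR and LeidenR reach exactly the same clusterings, each with positive probability, and the \textsc{LocalMoveAlt} fallback supplies the local-move property, with everything else deferred to Leiden's original arguments; your uniform positive lower bound on the probabilities and your check of the singleton-tree reordering are useful details the paper leaves implicit. One correction to your step~1: Algorithm~\ref{alg:leiden} stops when $|C_l| = |V_l|$, not when no coarse vertex moves, so $\gamma$-separation needs node optimality of that final all-singleton coarse clustering, and the ``improves unless node optimal'' property certifies this only when the last local-move phase made no improvement (if it improved a non-singleton input into an all-singleton clustering, nothing certifies it---a case the paper's one-line appeal to the local-move trait does not address either).
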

\begin{proof}
pLeidenR can generate a clustering if and only if LeidenR can generate it, thus they are equivalent up to their probabilities to generate a given clustering. As the original proofs of Leiden's guarantees \cite{leiden_base} are independent of the probability distribution over possible clusterings, replacing LeidenR with pLeidenR preserves their correctness. As we achieve the necessary traits of the local move heuristic as discussed in section \ref{section:alt_plm}, the theorem is true.
\end{proof}

\begin{algorithm}[htbp]
\caption{pLeidenR: The parallel refinement algorithm of pLeiden.}
\label{alg:pleidenR}
\begin{algorithmic}[1]
\Require $G = (V,E)$. Constraint Clustering $B$. $\mathit{seed}$.
\Ensure Vector $C_{out}$.
\State $H \gets $ $[null]\times|V|$
\State $U \gets \{u ~|~ u \in V \land w'(u, B[u]) \geq 0\}$
\For{$v \in U$ \textbf{in parallel}}
\State $A_v \gets \{(u,v) ~|~ u \in (N(v) \cup v) \cap  U \land B[u]=B[v] \land w'(u,v) \geq 0\}$
\State $H[v] \gets $ \textsc{ChooseRandom}($A_v)$
\EndFor
\State $f \gets $ \textsc{RandomizedOrderingFunction}($\mathit{seed}$)
\State $T \gets $ \textsc{ExtractTrees}($H, f$)
\State $T \gets $ \textsc{SerializeAndSort}($T,f$)
\State $C_{out} \gets $ \textsc{CommitWellConnectedPrefixes}($T$)
\end{algorithmic}
\end{algorithm}

\section{Empirical Results}
\subsection{Comparisons}
We compare pLouvain and pLeiden to v-Louvain (commit 91f2803) \cite{Sahu2025-ye}, GALA (commit e43bdea) \cite{wang_gpu}, cuGraph's Louvain \cite{cugraph_louvain} (v26.02), cuGraph's Leiden \cite{cugraph_leiden} (v26.02), GVE-Louvain (commit e228af3) \cite{gve_louvain}, GVE-Leiden (commit f741a40) \cite{leiden_sahu}, and Networkit Louvain (v11.1.post1) and Leiden (v11.2) \cite{nguyen_leiden}.
We are unable to compare to Parallel Correlation Clustering \cite{shi2021scalable}, as their code could not build successfully.
GVE-Louvain outperforms all CPU and GPU parallel Louvain implementations to which its authors compared.
We consider this test set to be representative of the current state-of-the-art in shared-memory parallel Louvain and Leiden implementations.

We add an uncoarsening phase to pLeiden to produce pLeiden+.
Due to the uncoarsening phase, pLeiden+ has no per-iteration guarantees; $\gamma$-separation and $\gamma$-connectivity are instead guaranteed only in stable iterations.

\subsection{Test Systems}
We assess our implementations, as well as v-Louvain, GALA, and cuGraph, on an Nvidia B200 GPU with 180GB of VRAM.
We assess GVE-Louvain/Leiden, and NetworKit Louvain/Leiden on an AMD Ryzen 9950x3D CPU, and run each with 32 threads.
This system has 96GB of 6GT/s DDR5 RAM in a dual-channel configuration.
We compile GPU programs with Cuda Toolkit version 12.8 (driver version 580.95.05) and CPU programs with g++ 13.3.0.

\subsection{Test Data and Configuration}
We use a set of graphs~\cite{Gilbert2024-se} previously used to evaluate graph partitioning methods on GPUs~\cite{jet}.
This test set overlaps with the test sets of competitor works \cite{Sahu2025-ye, wang_gpu, leiden_sahu}, but with some additional preprocessing.
Each graph is symmetrized, with all edge weights set to 1, and only the largest connected components retained.
We run each program on each graph 21 times (except for cugraph which we run 5 times), and collect the median clustering time and modularity result. The synchronous implementations also lead to low runtime variance. 
We measure quality in terms of modularity as it is the most common objective function used to evaluate Louvain and Leiden implementations, and the other programs only have the ability to optimize for modularity. For our programs, we compute the standard deviation in modularity, and find it to be 0.0002 on average. 

\begin{table}[htbp]
    \centering
    \caption{A Collection of undirected graphs used for performance evaluation. We preprocess the graphs to extract the largest connected component. The number of vertices ($n$), edges ($m$), the ratio of max vertex degree ($\Delta$) to average degree after preprocessing (rounded to nearest integer), and the modularity $Q$ with a baseline sequential Louvain+ implementation are given. ($\ddagger$: Use a ``synchronous'' version.)}
    \begin{tabular}{@{}lS[table-format=3.1]S[table-format=3.1]S[table-format=6,round-mode=places,round-precision=0]S[table-format=0.4,round-mode=places,round-precision=4]@{}}
    \toprule
        Graph & {$n$ ($\times 10^6$)} & {$m$ ($\times 10^6$)} & {$\Delta/(2m/n)$} & $Q$\\
        \midrule
grid & 8.0 & 16.0 & 1 & 0.990263\\
cube & 8.0 & 23.9 & 1.01 & 0.961124\\
delaunay23 & 8.4 & 25.2 & 4.67 & 0.989438\\
bubbles00 & 18.3 & 27.5 & 1 & 0.993844\\
bubbles10 & 19.5 & 29.2 & 1 & 0.993773\\
rgg22 & 4.2 & 30.4 & 2.49 & 0.986604\\
bubbles20 & 21.2 & 31.8 & 1.00 & 0.994043\\
delaunay24 & 16.8 & 50.3 & 4.33 & 0.991533\\
rgg23 & 8.4 & 63.5 & 2.64 & 0.988320\\
rgg24 & 16.8 & 132.6 & 2.53 & 0.989927\\[4pt]
ppa & 0.6 & 21.2 & 43.97 & 0.742384\\
cage15 & 5.2 & 47.0 & 2.52 & 0.894352\\[4pt]
kmerV2a & 53.5 & 57.1 & 18.28 & 0.991366$^\ddagger$\\
kmerU1a & 64.7 & 66.4 & 17.05 & 0.989436$^\ddagger$\\
kmerP1a & 138.9 & 148.5 & 18.71 & 0.976105$^\ddagger$\\
kmerA2a & 170.4 & 179.9 & 18.94 & 0.975496$^\ddagger$\\
kmerV1r & 214.0 & 232.7 & 3.68 & 0.953673$^\ddagger$\\[4pt]
feRotor & 0.1 & 0.7 & 9.4 & 0.908991\\
afShell & 1.5 & 25.6 & 1 & 0.887928\\
Hook1498 & 1.5 & 29.7 & 2.32 & 0.893854\\
Geo1438 & 1.4 & 30.9 & 1.3 & 0.873779\\
Serena & 1.4 & 31.6 & 5.46 & 0.877983\\
audikw & 0.9 & 38.4 & 1.59 & 0.912493\\
channel050 & 4.8 & 42.7 & 1.01 & 0.851467\\
LongCoup & 1.5 & 42.8 & 12.96 & 0.863976\\
dielFilterV3 & 1.1 & 44.1 & 3.36 & 0.926019\\
MLGeer & 1.5 & 54.7 & 1 & 0.818947\\
Flan1565 & 1.6 & 57.9 & 1.08 & 0.841761\\
Bump2911 & 2.9 & 62.4 & 4.43 & 0.857308\\
CubeCoup & 2.2 & 62.5 & 1.16 & 0.850907\\
HV15R & 2.0 & 162.4 & 3.06 & 0.835900\\
Queen4147 & 4.1 & 162.7 & 1.02 & 0.727577\\[4pt]
nlpkkt120 & 3.5 & 46.7 & 1.03 & 0.924715\\
nlpkkt160 & 8.3 & 110.6 & 1.02 & 0.936780\\
nlpkkt200 & 16.2 & 216.0 & 1.02 & 0.944245\\[4pt]
roadUSA & 23.9 & 28.9 & 3.73 & 0.998156\\
europeOsm & 50.9 & 54.1 & 6.12 & 0.999004\\[4pt]
circuit5M & 5.6 & 27.0 & 132863.56 & 0.817375$^\ddagger$\\[4pt]
vasStokes2M & 2.1 & 48.4 & 29.01 & 0.891383\\
vasStokes4M & 4.3 & 97.7 & 25.32 & 0.911944\\
stokes & 11.3 & 258.0 & 37.86 & 0.929308$^\ddagger$\\[4pt]
dblp10 & 0.2 & 0.7 & 37.61 & 0.864228\\
amazon08 & 0.7 & 3.5 & 112.38 & 0.893075\\
socPokec & 1.6 & 22.3 & 543.76 & 0.690303\\
citation & 2.9 & 30.3 & 480.37 & 0.833754\\
comLiveJournal & 4.0 & 34.7 & 853.92 & 0.723462\\
socLiveJournal & 4.8 & 42.8 & 1149.38 & 0.731914\\
ljournal08 & 5.4 & 49.5 & 1052.40 & 0.766239\\
hollywood09 & 1.1 & 56.3 & 108.87 & 0.751641\\
products & 2.4 & 61.8 & 337.41 & 0.878197\\
hollywood11 & 1.9 & 114.3 & 109.94 & 0.753702\\
Orkut & 3.1 & 117.2 & 436.71 & 0.663088\\
enwiki21 & 6.3 & 136.5 & 5324.22 & 0.663872\\[4pt]
wbEdu & 8.9 & 44.2 & 2585.75 & 0.990678\\
ic04 & 7.3 & 149.1 & 6296.91 & 0.963015\\
uk02 & 18.5 & 261.6 & 6879.39 & 0.990040\\
arabic05 & 22.6 & 552.2 & 11797 & 0.989574\\
\bottomrule
   \end{tabular}
\label{tab:graphs}
\end{table}

For pLouvain, pLeiden, and pLeiden+, we set the local move heuristic pass limit to 6, with the temperature parameter $\phi = 0.75$ for the first four passes and $\phi = 0.25$ for the last two.
We leave all settings as default for the other programs, except for NetworKit Leiden where we set the Leiden run count to 1 for consistency with other Leiden implementations.
We use the modularity numbers given by the output of each program where applicable and accurate (Our Louvain/Leiden, v-Louvain, GVE-Louvain/Leiden, cugraph Louvain, GALA), or compute the modularity values where none are given (NetworKit Louvain/Leiden) or are incorrect (cuGraph Leiden).

In Fig.~\ref{fig:runtime_all}, we present the geometric mean across all graphs of single-iteration runtimes, normalized by the runtime of pLouvain.
In Fig.~\ref{fig:modularity_all}, we present the average modularity delta from pLouvain across all graphs.
In Fig.~\ref{fig:runtime_all} and Fig.~\ref{fig:modularity_all}, we exclude instances where the competitor programs ran out of memory (GVE-Louvain and GVE-Leiden on one graph, Networkit Louvain on two graphs) or crashed (GALA on three graphs) from the calculations for those programs.
For Fig.~\ref{fig:modularity_all}, we additionally exclude substantial outliers, which differ from pLouvain by more than 0.2; this affects 2 graphs for GALA, and 1 graph each for cugraph Louvain and Leiden.
In Fig.~\ref{fig:runtime_modularity_iterated}, we plot the runtime and modularity improvement of performing 10 additional (11 total) iterations of pLouvain, pLeiden, and pLeiden+ on the test set.
The programs do not necessarily reach stability in this number of iterations.
We include modularity results of clusterings obtained from VieClus \cite{BiedermannHSS18} by running on 24 processes with a 30-minute time limit, one trial per graph.
We run this experiment on a system with 180 vCPUs (from an AMD EPYC 9655) with 740 GB of RAM; the process count is chosen to avoid running out of memory for most graphs.
A recent work \cite{ansotegui2025uncovering} showed that VieClus found clusterings of optimal modularity for every graph that they tested, and in less than 25 seconds per graph.
Our test graphs are much larger, by six orders of magnitude in some cases, so it is unlikely that VieClus can find the optimal clusterings within a reasonable timeframe.
Regardless, VieClus serves as a compelling benchmark to compare with our iterated methods.

In Table~\ref{tab:ablation}, we conduct an ablation study of several key aspects of pLouvain, to determine their effectiveness.
We showcase the following experiments:
\begin{enumerate}
    \item No local move heuristic passes in the uncoarsening phase (i.e. standard Louvain)
    \item Standard Louvain, but with double the local move heuristic passes in the coarsening phase
    \item No symmetry breaking (afterburner filter disabled and $\phi = 0$)
    \item Simulated annealing disabled (i.e. $\phi = 0$)
    \item Symmetry breaking with the MLH instead of the afterburner filter. MLH is active in odd passes, and inactive in even passes. $\phi = 0$
    \item Fused enumeration and argmax kernels in passes when many vertices ($>3$\%) are moved
    \item Fused enumeration and argmax kernels in all passes
    \item Gathering vertices by cluster membership in contraction
\end{enumerate}

\begin{table}
    \centering
    \caption{Ablation Study of pLouvain. Slowdown is calculated as the geometric mean of running time ratio compared to the default pLouvain run (lower values means faster). The modularity delta gives the average difference from the baseline (higher is better). }
    \begin{tabular}{@{}lS[table-format=1.3{$\times$}]S[table-format=+1.4]@{}}
     \toprule
       Experiment & {Slowdown} & {Mod. delta} \\
       \midrule
       Default pLouvain & 1               & 0\\
       Louvain (i.e., no uncoarsening) & 0.775{$\times$} & -0.0126 \\
       Louvain, 2x LMH Passes & 1.16{$\times$} & -0.0082 \\
       pLouvain, no symmetry breaking + $\phi = 0$ & 1.407{$\times$} & -0.4133 \\
       pLouvain, no simulated annealing (i.e. $\phi = 0$) & 0.982{$\times$} & -0.0014 \\
       pLouvain, MLH + $\phi = 0$ & 0.94{$\times$} & -0.0017 \\
       pLouvain, fused kernel ($>3\%$ vertices moved) & 1.001{$\times$} & 0.0001 \\
       pLouvain, fused kernel (all passes) & 1.058{$\times$} & 0.0001 \\
       pLouvain, gathered contraction & 1.371{$\times$} & 0 \\
       \bottomrule
    \end{tabular}
    \label{tab:ablation}
\end{table}

\begin{figure}
    \centering
    \includegraphics[width=\linewidth]{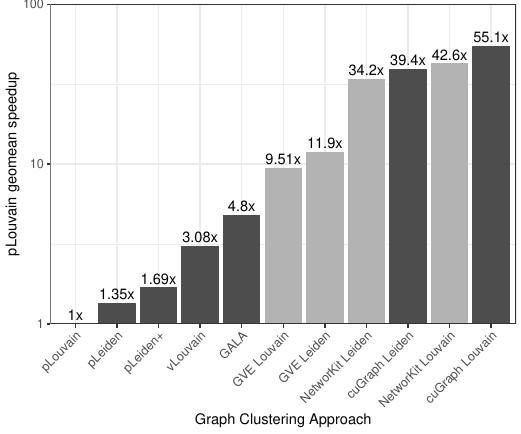}
    \caption{Running time comparison of clustering algorithms.}
    \label{fig:runtime_all}
\end{figure}

\begin{figure}
    \centering
    \includegraphics[width=\linewidth]{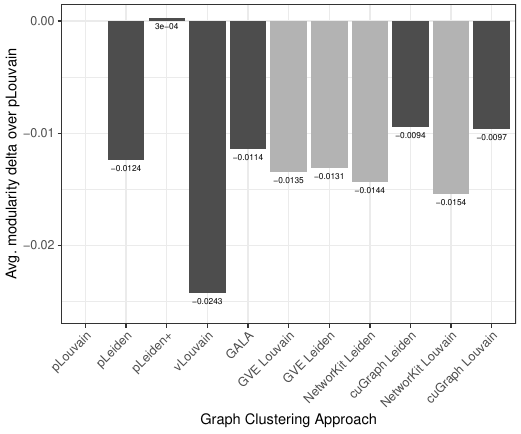}
    \caption{Modularity comparison of clustering algorithms.}
    \label{fig:modularity_all}
\end{figure}

\begin{figure}
    \centering
    \includegraphics[width=0.49\linewidth]{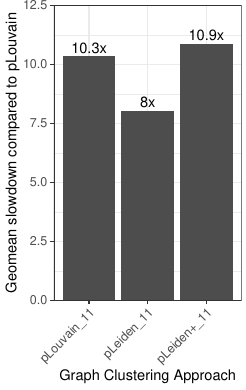}%
\includegraphics[width=0.49\linewidth]{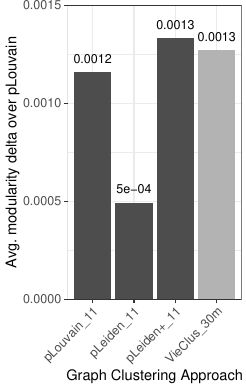}
    \caption{Running time (left) and modularity (right) comparisons with 10 additional (11 total) iterations.}
    \label{fig:runtime_modularity_iterated}
\end{figure}


\section{Analysis}
\subsection{Comparison to Parallel State of the Art}
\subsubsection{Runtime}
The longest runtime of pLouvain on any graph is just 0.5974 seconds.
It is the fastest method on 55 of 57 graphs; the two smallest test graphs account for both cases when pLouvain is not the fastest.
It achieves a processing rate of up to 6.1 billion nonzeroes per second ($\frac{2m}{t}$).
pLeiden is 35.3\% slower by geometric mean than pLouvain; it is faster than every competitor (Louvain or Leiden) on 52 of 57 graphs.
pLeiden+ is 25.1\% slower by geometric mean than pLeiden, and is faster than every competitor on 43 of 57 graphs.

pLeiden and pLeiden+ are slower than pLouvain in part due to the additional cost to compute pLeidenR (13.8\% of pLeiden's runtime on average), and in part due to the slower coarsening rate than pLouvain.
pLeiden and pLeiden+ process 29.1\% more edges by geometric mean as a result of their slower coarsening rate than pLouvain.
This trend is common to any implementation of Leiden; the original Leiden algorithm only manages to be faster than the original implementation of Louvain due to vertex pruning techniques and early termination within the local move heuristic.
These techniques are commonplace in the implementations of Louvain in our comparison set, and within our own implementations.

\subsubsection{Modularity}
pLouvain and pLeiden+ both achieve higher modularity than any competitor on 56 of 57 graphs, and the best overall method on every graph is either pLouvain or pLeiden+.
pLeiden achieves higher modularity than any competitor on only 14 of 57 graphs, and overall performs in the middle of the pack.
pLeiden performs worse than pLouvain and pLeiden+ due to the lack of an uncoarsening phase and a local move heuristic pass count that is tuned for the presence of uncoarsening.

v-Louvain's modularity results are worse than GVE-Louvain by a wide margin;
both are asynchronous algorithms with largely similar design choices.
This demonstrates that asynchronous approaches scale poorly to the massive concurrency of the B200 GPU.
GALA performs better in terms of modularity vs v-Louvain.
However, it is substantially worse than all competitors on the grid graph and circuit5M, two outliers which we exclude from GALA's Fig.~\ref{fig:modularity_all} data.
It is synchronous, but degree-based batching is its only form of symmetry-breaking.
This is ineffective when one batch contains almost all vertices;
99.9\% of vertices in the grid graph and 92.9\% of vertices in circuit5M have degree 4.

\subsection{Ablation Study}
\label{section:ablation}
pLouvain is both extremely efficient and very high quality.
Table~\ref{tab:ablation} provides insight into the contribution of certain design choices towards this result.
Without any symmetry breaking, pLouvain converges 41\% slower to drastically worse solutions.
Disabling simulated annealing leads to worse solutions by one sixth of the difference between pLouvain and the closest competitor (cugraph Leiden), with a 1.8\% decrease in runtime.
Using MLH in place of the afterburner filter (thus losing compatibility with the simulated annealing techniques of \cite{dkaminpar_jet}) similarly leads to worse solutions by one fifth of the difference between pLouvain and its nearest competitor,
but also brings slightly faster runtimes by 6.4\% as a tradeoff.
We observe that 80\% of the quality benefit of the afterburner filter versus MLH is because it enables simulated annealing, whereas simulated annealing contributes a minority of the slowdown versus MLH.

Our results show little difference between kernel fission and fusion for passes when the number of moves is large.
In contrast, we observe a positive effect ($5.8\%$ speedup) with kernel fission for small move sets.
Kernel fission permits a more efficient enumeration update kernel for small move sets,
which are very common in the uncoarsening phase.

Graph contraction takes just 6.5\% of pLouvain's runtime on average and exceeds 10\% of the runtime on just two graphs.
At most, it accounts for 12.6\% of the total pLouvain runtime.
Versus the vertex-gathering-based contraction ablation experiment, pLouvain is 37.1\% faster, and shows a 3.85x speedup for contraction specifically.
pLouvain's graph contraction is 15.5x faster than that of v-Louvain by geometric mean, and 15.1x faster than that of GALA.

Uncoarsening is the most important reason for pLouvain's high quality.
By performing the local move heuristic within the uncoarsening phase, the runtime increases by just 29.0\%, but the modularity increases by 0.0127, which is more than the difference between pLouvain and its nearest competitor (as well as the difference to pLeiden).
It effectively performs twice the local move heuristic passes that standard Louvain would with the same pass limit, therefore we compare to standard Louvain with double the pass limit.
That configuration is 16.0\% slower than pLouvain, with 0.0082 lower modularity on average (0.0012 higher than cugraph Leiden).
More local move heuristic passes can be performed more cheaply and to greater effect in the uncoarsening phase.
This is especially important for synchronous algorithms that require more passes than asynchronous algorithms.

\subsection{Iterating our Methods}
pLeiden is slower and lower quality than pLouvain, and pLeiden+ is slower but of roughly similar quality to pLouvain.
However, Leiden's greatest strengths are the additional guarantees it obtains over successive iterations.
\subsubsection{Modularity}
Fig.~\ref{fig:runtime_modularity_iterated} shows that pLeiden, across 11 total iterations, produces lower modularity results than pLouvain with the same iteration count.
This is a consequence of no uncoarsening in pLeiden, which is consistent with the single iteration experiments.
In contrast, iterated pLeiden+ produces superior results to iterated pLouvain, and on average very slightly outperforms VieClus given 30 minutes with 24 processes.

\subsubsection{Runtime}
While pLeiden is slower than pLouvain for the first iteration, Fig.~\ref{fig:runtime_modularity_iterated} shows that 11 total iterations of pLeiden are 28.8\% faster by geomean than the same iteration count for pLouvain.
As pLouvain begins from a singleton clustering on all iterations, successive pLouvain iterations see only slight speedups versus the first iteration.
pLeiden begins from the given input clustering, therefore the local move heuristic has much less work to do in successive iterations.
Iterated pLeiden+ is 36.3\% slower than iterated pLeiden, a larger difference than between their first iteration runtimes.
This is likely because there is a smaller difference between local move heuristic passes in the coarsening versus the uncoarsening for successive pLeiden+ iterations.

\subsubsection{Versus VieClus}
Iterated pLeiden+ exceeds or matches VieClus' quality on 26 of 57 graphs.
Iterated pLouvain exceeds or matches VieClus' quality on 23 graphs.
We expect that VieClus would eventually beat iterated pLeiden+ on these graphs given enough time, but the time requirement for that to occur may be significant.
Given the runtime constraints of our experiments, VieClus is limited by the speed with which it computes initial population and offspring clusterings.

\section{Conclusion}
We develop pLouvain and pLeiden, novel GPU-parallel implementations of Louvain+ and Leiden, respectively. Both benefit from our high-performance synchronous implementation of the local move heuristic. We find that uncoarsening is vital given the synchronous local move.
pLeiden provably ensures each of the six original Leiden guarantees.

We conduct an extensive empirical evaluation with 57 test graph instances from 10 families. Our performance results are obtained primarily on an NVIDIA B200 GPU and a 16-core AMD Ryzen 9950x3D CPU.

Our graph contraction scheme achieves substantial speedups over comparable GPU-parallel methods. Optimizations such as the afterburner filter for symmetry breaking, kernel fission, and gather-free contraction, contribute to improvements in runtime or quality. Our parallel implementation of the Leiden refinement scheme accounts for only a small fraction of pLeiden's running time.

Our novel iteration scheme for Louvain+ guarantees basic internal connectivity of clusters upon reaching stability. We additionally construct a GPU-parallel uncoarsening extension to Leiden, pLeiden+. 
It also ensures the Leiden guarantees, except the per-iteration guarantees are delayed until stable iterations.

With just 10 iterations, pLouvain and pLeiden+ are both competitive with the memetic clustering algorithm VieClus. There is thus potential for a memetic clustering algorithm on the GPU based around pLouvain or pLeiden+.
Another potential use-case for our programs is layered-label propagation for graph compression \cite{BRSLLP}.

\section*{Acknowledgements}
This research is supported in part by NSF grants 1955971 and 2437873. We thank the reviewers for their helpful comments.

\bibliographystyle{ieeetr}
\bibliography{refs}






\twocolumn[%
{\begin{center}
\Huge
Appendix: Artifact Description        
\end{center}}
]



\section{Overview of Contributions and Artifacts}

\subsection{Paper's Main Contributions}

\artexpl{
Provide a list of all main contributions of the paper.
}

\begin{description}
    \item[$C_1$] On an NVIDIA B200 GPU and a 57-instance graph dataset, pLouvain achieves a 3.1x geometric mean speedup over v-Louvain.
    \item[$C_2$] pLeiden offers quality guarantees equivalent to Leiden, and is 8.8x faster (geometric mean) than a prior parallelization.
    \item[$C_3$] pLouvain and pLeiden+ consistently outperform other approaches in terms of quality (modularity).
    \item[$C_4$] We present an efficient symmetry-breaking technique based on the Jet graph partitioner's \textit{afterburner} filter.
    \item[$C_5$] Our graph contraction implementation is $\geq$15x faster by geometric mean than state-of-the-art competitors v-Louvain and GALA.
    \item[$C_6$] We give an improved iteration scheme to mitigate Louvain's weak internal cluster connectivity problem.
\end{description}

\subsection{Computational Artifacts}

\artexpl{
List the computational artifacts related to this paper along with their respective DOIs. Note that all computational artifacts may be archived under a single DOI.
}

\begin{description}
\item[$A_1$] https://doi.org/10.5281/zenodo.18717935
\item[$A_2$] https://doi.org/10.5281/zenodo.18719087
\end{description}

\artexpl{
Provide a table with the relevant computational artifacts, 
highlight their relation to the contributions (from above) and 
point to the elements in the paper that are reproducible by each artifact, e.g., 
which figures or tables were generated with the artifact.
}

\begin{center}
\begin{tabular}{rll}
\toprule
Artifact ID  &  Contributions &  Related \\
             &  Supported     &  Paper Elements \\
\midrule
$A_1$   &  $C_{1-6}$ & Table 2 \\
        &        & Figures 1-3\\
\midrule
$A_2$   &  $C_{1-6}$ & Table 2 \\
        &        & Figures 1-3\\
\bottomrule
\end{tabular}
\end{center}

\section{Artifact Identification}

\artexpl{
Provide the following six subsections for each computational artifact $A_i$.
}

\newartifact

\artrel
This artifact contains the code for pLouvain, pLeiden, pLeiden+, and the ablation experiments.
The ablation experiments are contained in appropriately named branches in the git repository linked from the DOI.
This artifact is necessary for experiments showcased in Table 2 and Figures 1-3.

\arttime
Compilation should take 5-10 minutes, including dependencies.

\artin

\artinpart{Hardware}
The programs should run on any Nvidia GPU since the Turing architecture.
It has been tested on a B200, RTX 4090, and an RTX 5090.

\artinpart{Installation and Deployment}
The code depends on the Kokkos (https://github.com/kokkos/kokkos) framework, version $\geq$4.7.0.
It also depends on KokkosKernels (https://github.com/kokkos/kokkos-kernels), version $\geq$4.7.0.
To compile, Cuda Toolkit version $\geq$12.0 and CMake version $\geq$3.28 are required.

\newartifact

\artrel

\artexpl{
    Briefly explain the relationship between the artifact and contributions.
}
This artifact contains scripts to generate all of the running time and clustering modularity results underlying Table 2, and Figures 1-3, as well as some miscellaneous quantities in section VII.
Additionally, it contains the data produced by our experiments, organized into 3 levels of granularity.
At the finest level, we have individual running times and modularity results for each combination of graph and program.
At the next level, we aggregate these results by Table/Figure.
At the coarsest level, we have the headline numbers used to create each Table/Figure.

\artexp

\artexpl{
Provide a higher level description of what outcome to expect from the corresponding experiments. Provide an explanation of how the results substantiate the main contributions.
}

\artsampl{
Algorithm A should be faster than Algorithms C and B in all GPU scenarios.    
}

The geometric mean runtimes should be, from shortest to longest: pLouvain, pLeiden, pLeiden+, v-Louvain, GALA, GVE Louvain, GVE Leiden, Networkit Leiden, cuGraph Leiden, Networkit Louvain, and cuGraph Louvain.
The average modularity results should be (outliers excluded), from highest to lowest: pLeiden+, pLouvain, cuGraph Leiden, cuGraph Louvain, GALA, pLeiden, GVE Leiden, GVE Louvain, Networkit Leiden, Networkit Louvain, and v-Louvain.

For the multiple iteration experiments, pLeiden\_11 should be fastest but lowest quality,  pLeiden+\_11 should be slowest but highest quality, and pLouvain\_11 should be in the middle for both runtime and quality.

VieClus quality results should be slightly lower than pLeiden+\_11 quality.

The results obtained by running our scripts should be similar to our results in idpds26\_results.tar.gz.

\arttime

\artexpl{
Estimate the time required to reproduce the artifact, providing separate estimates for the individual steps: Artifact Setup, Artifact Execution, and Artifact Analysis.
}

\artsampl{
The expected computational time of this artifact on GPU X is 20~min.    
}
Compilation of ours and competitor code should take up to 20 minutes.
Computational time for GPU experiments including competitors should take 6-8 hours.
Computational time for CPU competitors should take 6-8 hours.
Computational time for VieClus should take about 2-3 days.

\artin

\artinpart{Hardware}

\artexpl{
Specify the hardware requirements and dependencies (e.g., a specific interconnect or GPU type is required).
}
GPU experiments are run on the Nvidia B200, rented from cloud provider Verda (https://verda.com/).
Runtime results vary to a small degree ($<$5\%) depending on which GPU you get assigned.
We believe this is because some instances have slower CPUs than others, leading to longer kernel launch latencies.
All of our GPU results, except for the "gathered contraction" ablation experiment, were generated on the exact same instance.
For that specific experiment, we reran the baseline pLouvain experiment on the same instance for normalization purposes.

CPU experiments are run on a consumer grade system having an AMD Ryzen 9950x3d 16-core CPU and 96GB of DDR5 dual-channel memory.

The VieClus experiment is run on a 180-vcore cloud instance from Verda.

\artinpart{Software}

\artexpl{
Introduce all required software packages, including the computational artifact. For each software package, specify the version and provide the URL.
}
Our code is detailed in artifact $A_1$.

Competitor Software:
\begin{enumerate}
    \item v-Louvain (https://github.com/puzzlef/louvain-communities-cuda)
    \item GALA
    \item GVE Louvain (https://github.com/puzzlef/louvain-communities-openmp)
    \item GVE Leiden (https://github.com/puzzlef/leiden-communities-openmp)
    \item Networkit Louvain v11.1.post1 and Leiden v11.2 (https://networkit.github.io/)
    \item cuGraph Louvain and Leiden (https://github.com/rapidsai/cugraph) v26.02
    \item VieClus v1.1 (https://github.com/VieClus/VieClus)
\end{enumerate}

\artinpart{Datasets / Inputs}

\artexpl{
Describe the datasets required by the artifact. Indicate whether the datasets can be generated, including instructions, or if they are available for download, providing the corresponding URL.
}
The test graph dataset is available at https://scholarsphere.psu.edu/resources/fd9ba209-a0cd-4f33-994b-c22ae3bcb243/downloads/35163?download=true.
It contains a superset of the graphs listed in Table 1, represented in the Metis file format.
The "data\_setup.sh" script handles the process of downloading, decompressing, and de-archiving the dataset.
It also creates matrix marketplace format copies of the graph files.

\artinpart{Installation and Deployment}

\artexpl{
Detail the requirements for compiling, deploying, and executing the experiments, including necessary compilers and their versions.
}
We compiled ours and competing software with Cuda Toolkit version 12.8 and g++ 13.3.0.
For VieClus, we used OpenMPI version 4.1.6.
Run the "setup\_gpu.sh" script followed by the "setup\_our\_code.sh" script to install our code's dependencies and build our code.
Run the "setup\_competitors\_gpu.sh" script to build v-Louvain and GALA.
Run the "setup\_cpu\_competitors.sh" to build GVE-Louvain and GVE-Leiden.
The latter two scripts clone our forks of these programs, which include an improved data collection process and some bugfixes for GALA.

For the cuGraph and Networkit programs, you will first need to install conda/miniconda, then install rapidsai::cugraph and conda-forge::networkit.

For VieClus, you can run "setup\_and\_run\_vieclus.sh" which handles both the program building and experiment execution.

\artcomp

\artexpl{
Provide an abstract description of the experiment workflow of the artifact. It is important to identify the main tasks (processes) and how they depend on each other. 

A workflow may consist of three tasks: $T_1, T_2$, and $T_3$. The task $T_1$ may generate a specific dataset. This dataset is then used as input by a computational task $T_2$, and the output of $T_2$ is processed by another task $T_3$, which produces the final results (e.g., plots, tables, etc.). State the individual tasks $T_i$ and provide their dependencies, e.g., $T_1 \rightarrow T_2 \rightarrow T_3$.

Provide details on the experimental parameters. How and why were parameters set to a specific value (if relevant for the reproduction of an artifact), e.g., size of dataset, number of data points, input sizes, etc. Additionally, include details on statistical parameters, like the number of repetitions.
}

First, you must clone the appropriate git repositories, compile code, and download/uncompress the graph dataset.
You can do this with the aforementioned scripts.
Second, you will run our experiment scripts, which run our code and the competitors on each graph, and output the runtime/modularity datapoints.
These scripts include: "run\_our\_experiments.sh", "run\_gpu\_competitor\_experiments.sh", "run\_cpu\_competitor\_experiments.sh", "conda\_gpu\_competitors.sh", "conda\_cpu\_competitors.sh", and "setup\_and\_run\_vieclus.sh".

\artout
In our analysis, we aggregate the raw datapoints into spreadsheets by Table/Figure.
This is a manual process, which involves copying columns from the results files and pasting them into a spreadsheet.
We collect headline numbers from these spreadsheets, including the geometric mean runtimes and mean modularity deltas from pLouvain.
The headline numbers are the basis of Table 2 and Figures 1-3.

\end{document}